%
\documentclass[runningheads]{llncs}
\usepackage{graphicx}
\usepackage{amsmath}
\usepackage{amssymb}
%
\newcommand{\mc}{{\rm mc}}
\newcommand{\cmc}{{\rm cmc}}
\newcommand{\mb}{{\rm mb}}

\begin{document}
\title{An Improved Fixed-Parameter Algorithm for Max-Cut Parameterized by Crossing Number\thanks{This work is partially supported by JSPS KAKENHI Grant Numbers JP16H02782, JP16K00017, JP16K16010, JP17H01788, JP18H04090, JP18H05291, JP18K11164, and JST CREST JPMJCR1401.}}
\titlerunning{An FPT Algorithm for Max-Cut Parameterized by Crossing Number}
%
\author{Yasuaki~Kobayashi\inst{1}\and
Yusuke~Kobayashi\inst{2} \and
Shuichi~Miyazaki\inst{3}\and
Suguru~Tamaki\inst{4}}
\authorrunning{Y. Kobayashi et al.}
%
\institute{Graduate School of Informatics, Kyoto University \and Research Institute for Mathematical Sciences, Kyoto University\and
Academic Center for Computing and Media Studies, Kyoto University \and
School of Social Information Science, University of Hyogo}
\maketitle 
\begin{abstract}
The Max-Cut problem is known to be NP-hard on general graphs, while it can be solved in polynomial time on planar graphs.  
In this paper, we present a fixed-parameter tractable algorithm for the problem on ``almost'' planar graphs:
Given an $n$-vertex graph and its drawing with $k$ crossings, our algorithm runs in time $O(2^k(n+k)^{3/2} \log (n + k))$.
Previously, Dahn, Kriege and Mutzel (IWOCA 2018) obtained an algorithm that, given an $n$-vertex graph and its $1$-planar drawing with $k$ crossings, runs in time $O(3^k n^{3/2} \log n)$.
Our result simultaneously improves the running time and removes the $1$-planarity restriction.
\keywords{Crossing number \and Fixed-parameter tractability \and Max-Cut}
\end{abstract}
%
%
%
\section{Introduction}\label{sec:intro}

The Max-Cut problem is one of the most basic graph problems in theoretical computer science.  In this problem, we are given an edge-weighted graph, and asked to partition the vertex set into two parts so that the total weight of edges having endpoints in different parts is maximized.  This is one of the 21 problems shown to be NP-hard by Karp's seminal work~\cite{Karp72}.
To overcome this intractability, numerous researches have been done from the viewpoints of approximation algorithms~\cite{GW95,KKMO07,Has01,TSSW00}, exponential-time exact algorithms~\cite{GS17,W05}, and fixed-parameter algorithms~\cite{CJM15,MSZ18,MR99,RS07}.
There are several graph classes for which the Max-Cut problem admits polynomial time algorithms \cite{BJ00,Gur99}.
Among others, one of the most remarkable tractable classes is the class of planar graphs.
Orlova and Dorfman~\cite{OD72} and Hadlock~\cite{Had75} developed polynomial time algorithms for the unweighted Max-Cut problem on planar graphs, which are subsequently extended to the weighted case by Shih et al.~\cite{SWK90} and Liers and Pardella~\cite{LP12}.

Dahn et al.\@~\cite{DKM18} recently presented a fixed-parameter algorithm for 1-planar graphs.  A graph is called {\em 1-planar} if it can be embedded into the plane so that each edge crosses at most once. Their algorithm runs in time $O(3^k n^{3/2} \log n)$, where $n$ is the number of vertices and $k$ is the number of crossings of a given 1-planar drawing.  Their algorithm is a typical branching algorithm: at each branch, it removes a crossing by yielding three sub-instances. After removing all of the $k$ crossings, we have at most $3^{k}$ Max-Cut instances on planar graphs.  Each of these problems can be solved optimally in $O(n^{3/2} \log n)$ time by reducing to the maximum weight perfect matching problem with small separators \cite{LT80}, thus giving the above mentioned time complexity.

\paragraph{\bf Our contributions.}







To the best of the authors' knowledge, it is not known whether the Max-Cut problem on 1-planar graphs is solvable in polynomial time.  In this paper, we show that it is NP-hard even for unweighted graphs.

\begin{theorem}\label{thm:hardness}
 The Max-Cut problem on unweighted 1-planar graphs is NP-hard even when a 1-planar drawing is given as input.
\end{theorem}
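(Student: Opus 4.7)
The plan is to reduce from unweighted Max-Cut on general graphs, which is NP-hard, and to realize every instance as a 1-planar graph together with an explicit 1-planar drawing. Given $G = (V, E)$, I would first fix any polynomial-time computable drawing $D$ of $G$ in the plane, for instance a straight-line drawing with $V$ in general position, and let $c_e$ denote the number of crossings on edge $e$ in $D$. In such a drawing each pair of edges crosses at most once, so $c_e = O(|E|)$ and $\sum_e c_e$ is polynomial.

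The key step is an \emph{odd-length edge subdivision} tailored to $D$: replace each edge $e = \{u,v\}$ by a path $P_e$ of odd length $\ell_e$, where $\ell_e$ is chosen as the smallest odd integer with $\ell_e \geq \max(c_e, 1)$. I would draw $P_e$ by placing $\ell_e - 1$ subdivision vertices along the curve of $e$ at non-crossing interior points, arranged so that each of the $\ell_e$ resulting segments contains at most one crossing of $D$. This placement is always possible since $\ell_e \geq c_e$ and the crossings on $e$ are finitely many isolated points, and the resulting drawing of the new graph $G'$ is 1-planar by construction.

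For correctness, I would rely on a clean per-edge decomposition: since the internal vertices of distinct paths $P_e$ are disjoint and have degree two in $G'$, for any 2-coloring of $V$, the internal colors of each $P_e$ can be optimized independently. A short parity argument on a path of odd length $\ell_e$ shows that, given the colors of its endpoints $u, v$, the maximum number of cut edges inside $P_e$ is $\ell_e$ if $u, v$ are colored differently and $\ell_e - 1$ otherwise. Summing over all $e$ gives that the Max-Cut value of $G'$ equals the Max-Cut value of $G$ plus the polynomial-time computable constant $\sum_{e \in E}(\ell_e - 1)$, so an algorithm for Max-Cut on 1-planar graphs with a given drawing would solve Max-Cut on $G$ in polynomial time.

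I do not anticipate any substantial obstacle. The only details requiring care are (a) the elementary topological fact that $\ell_e - 1$ subdivision vertices can always be inserted on the curve of $e$ to separate its $c_e$ crossings into distinct segments, and (b) the parity calculation for Max-Cut on an odd-length path, which is a direct case analysis.
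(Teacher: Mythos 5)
Your proposal is correct and is essentially the paper's approach: both reduce from general unweighted Max-Cut by subdividing edges into odd-length paths drawn along the original curves so that each resulting segment carries at most one crossing, and both account for an additive, polynomial-time computable shift in the optimum. The paper phrases this as repeated replacement of a ``conflicting'' edge by a length-3 path (its Lemma~\ref{lem:reduction}, which for unit weights is exactly the $\ell_e=3$ case of your parity computation), whereas you do it in one shot with $\ell_e$ chosen from the crossing count; this is a cosmetic difference.
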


Next, we give an improved fixed-parameter algorithm, which is the main contribution of this paper:

\begin{theorem}\label{thm:alg}
    Given a graph $G$ and its drawing with $k$ crossings,
    the Max-Cut problem can be solved in $O(2^k(n+k)^{3/2}\log (n + k))$ time.
\end{theorem}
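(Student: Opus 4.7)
The plan is a depth-$k$ binary branching algorithm: at each crossing we branch into only two subproblems rather than three. Pick any crossing of the drawing, between edges $e_1=uv$ and $e_2$, and branch on whether the endpoints $u,v$ lie on the same side of the cut or on opposite sides. In both branches we identify $u$ and $v$ into a single vertex $z$ and discard $e_1$, so that the curve of $e_1$ can be shrunk to a point in the drawing and every crossing incident to $e_1$ (in particular the chosen one) disappears.

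On the \emph{same-side} branch this is a standard edge contraction: for each $w\notin\{u,v\}$, parallel edges $(u,w)$ and $(v,w)$ of weights $a,b$ are merged into a single edge $(z,w)$ of weight $a+b$, and $e_1$ becomes a loop that is discarded. On the \emph{opposite-side} branch, by symmetry we may assume $\sigma(u)=0$ and $\sigma(v)=1$; then $e_1$ is always cut, so we discard it and add $w(e_1)$ to a running additive constant. For each $w\notin\{u,v\}$, the contribution of $(u,w)$ (weight $a$) and $(v,w)$ (weight $b$) under this constraint is $a\,\sigma(w)+b\,(1-\sigma(w))=(a-b)\sigma(w)+b$, which matches exactly that of a single edge $(z,w)$ of weight $a-b$ (with $\sigma(z)=0$) plus the constant $b$; we replace the two edges by this edge and add $b$ to the constant.

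Each branching reduces the crossing count by at least one, so after at most $k$ rounds the drawing is planar. At each of the $\le 2^k$ leaves we obtain a weighted planar Max-Cut instance on at most $n$ vertices and $O(n+k)$ edges (with possibly negative weights and an additive constant), which is solvable in $O((n+k)^{3/2}\log(n+k))$ time by the classical reduction to a minimum-weight $T$-join in the planar dual, becoming weighted perfect matching on a planar graph via small separators~\cite{LT80}. Taking the maximum over the leaves yields the claimed running time.

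The main delicate step will be verifying the correctness of the opposite-side reduction, which reduces to a short case analysis over the parallel-edge patterns at $u$ and $v$ (including the case where either $a$ or $b$ is zero). A secondary technical point is that the contractions may produce negative edge weights, so the planar Max-Cut subroutine must accept arbitrary real weights; this is guaranteed by the $T$-join reduction, which is sign-insensitive.
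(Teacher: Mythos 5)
Your branching logic and weight bookkeeping are fine: the two cases $\sigma(u)=\sigma(v)$ and $\sigma(u)\neq\sigma(v)$ are exhaustive, and the ``signed contraction'' for the opposite-side branch (replacing parallel weights $a,b$ by a single edge of weight $a-b$ plus an additive constant $b$, and banking $w(e_1)$) is algebraically correct. The genuine gap is the topological claim at the heart of the argument: that identifying the endpoints $u,v$ of a crossed edge $e_1$ and shrinking its curve to a point makes every crossing on $e_1$ disappear, so that ``each branching reduces the crossing count by at least one.'' The other crossing edge $e_2$ locally separates $u$ from $v$ along the curve of $e_1$, so this identification cannot be realized in the drawing for free. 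If you place the merged vertex $z$ at $u$ and drag the remaining $\deg(v)-1$ edges of $v$ along the curve of $e_1$, each of them picks up a crossing with $e_2$ (and with every other edge that crossed $e_1$); if instead you place $z$ at the crossing point, then $e_2$ must detour around $z$ and crosses at least $\min(\deg(u),\deg(v))-1$ of the edges now incident to $z$. Either way a single crossing can turn into many, edge contraction is not crossing-number-monotone in general, and after $k$ rounds the leaf instances need not be planar --- so the final step (planar Max-Cut via the dual $T$-join/matching reduction of Lipton--Tarjan) has nothing to run on. Nothing in your write-up addresses this, and the ``main delicate step'' you flag (the sign bookkeeping) is actually the unproblematic part.

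This is exactly the obstruction the paper's heavier machinery is built to avoid. It first subdivides each crossing edge so that the four vertices adjacent to the crossing point all have degree two; its ``same side'' branch then contracts one such degree-two vertex from \emph{each} of the two crossing edges --- these two vertices lie in the same local quadrant of the crossing, not separated by either edge, so the identification is drawing-safe and genuinely removes the crossing. Its ``opposite side'' branch performs no contraction at all: it keeps the crossing in the drawing and records a separation constraint, producing a \emph{constrained} Max-Cut instance that is solved by a modified dual construction and a maximum-weight $b$-factor computation (via Gabow's reduction to planar matching). If you want to retain a two-way branching that really contracts in both cases, you must prove that the resulting instances remain planar or embeddable with controlled crossings --- as Chimani et al.\ do with a different branching rule --- and your current argument does not establish this.
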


Note that our algorithm not only improves the running time of Dahn et al.'s algorithm \cite{DKM18}, but also removes the $1$-planarity restriction.  An overview of our algorithm is as follows: Using a polynomial-time reduction in the proof of Theorem \ref{thm:hardness}, we first reduce the Max-Cut problem on general graphs (with a given drawing in the plane) to that on 1-planar graphs with a 1-planar drawing, without changing the number of crossings.  We then give a faster fixed-parameter algorithm than Dahn et al.~\cite{DKM18}'s for Max-Cut on 1-planar graphs.

The main idea for improving the running time is as follows: Similarly to Dahn et al.~\cite{DKM18}, we use a branching algorithm, but it yields not three but only two sub-instances. Main drawbacks of this advantage are that these two sub-instances are not necessarily on planar graphs, and not necessarily ordinary Max-Cut instances but with some condition, which we call the ``constrained Max-Cut'' problem.  To solve this problem, we modify the reduction of \cite{LP12} and reduce the constrained Max-Cut problem to the maximum weight $b$-factor problem, which is known to be solvable in polynomial time in general \cite{Gab83}. We investigate the time complexity of the algorithm in \cite{Gab83}, and show that it runs in $O((n + k)^{3/2} \log (n + k))$-time in our case, which proves the running time claimed in Theorem \ref{thm:alg}.


\paragraph{\bf Independent work.}
Chimani et al. \cite{CDJKKMN19} independently and simultaneously achieved the same improvement by giving an $O(2^k(n+k)^{3/2}\log (n + k))$ time algorithm for the Max-Cut problem on embedded graphs with $k$ crossings. They used a different branching strategy, which yields $2^k$ instances of the Max-Cut problem on planar graphs. 

\paragraph{\bf Related work.}
The Max-Cut problem is one of the best studied problems in several areas of theoretical computer science.
This problem is known to be NP-hard even for co-bipartite graphs \cite{BJ00}, comparability graphs \cite{Poc16}, cubic graphs \cite{Yan78}, and split graphs \cite{BJ00}.
From the approximation point of view, the best known approximation factor is $0.878$ \cite{GW95}, which is tight under the Unique Games Conjecture \cite{KKMO07}.
In the parameterized complexity setting, there are several possible parameterizations for the Max-Cut problem. 
Let $k$ be a parameter and let $G$ be an unweighted graph with $n$ vertices and $m$ edges.
The problems of deciding if $G$ has a cut of size at least $k$ \cite{RS07}, at least $m - k$ \cite{PPW16}, at least $m / 2 + k$ \cite{MR99}, $m / 2 + (n-1)/4 + k$ \cite{CJM15}, or at least $n - 1 + k$ \cite{MSZ18} are all fixed-parameter tractable. 
For sparse graphs, there are efficient algorithms for the Max-Cut problem. It is well known that the Max-Cut problem can be solved in $O(2^{t}tn)$ time \cite{BJ00}, where $t$ is the treewidth of the input graph.
The Max-Cut problem can be solved in polynomial time on planar graphs \cite{Had75,LP12,OD72,SWK90}, which has been extended to bounded genus graphs by Galluccio et al. \cite{GLV01}.


\section{Preliminaries}\label{sec:prel}

In this paper, an edge $\{u, v\}$ is simply denoted by $uv$, and a cycle $\{v_0, v_1, \ldots, v_k\}$ with edges $\{v_i, v_{(i+1)\mod k}\}$ for $0 \le i \le k$ is denoted by $v_0v_1\ldots v_k$.

A graph is {\em planar} if it can be drawn into the plane without any edge crossing.
A {\em crossing} in the drawing is a non-empty intersection of edges distinct from their endpoints.
If we fix a plane embedding of a planar graph $G$, then the edges of $G$ separate the plane into connected regions, which we call {\em faces}.

Consider a drawing not necessarily being planar, where no three edges intersect at the same point.
We say that a drawing is {\em 1-planar} if every edge is involved in at most one crossing.
A graph is {\em 1-planar} if it admits a 1-planar drawing.
Note that not all graphs are 1-planar: for example, the complete graph with seven vertices does not admit any 1-planar drawing.

Let $G = (V, E, w)$ be an edge weighted graph with $w: E \rightarrow \mathbb{R}$.
A {\em cut} of $G$ is a pair of vertex sets $(S, V \setminus S)$ with $S \subseteq V$. 
We denote by $E(S, V \setminus S)$ the set of edges having one endpoint in $S$ and the other endpoint in $V \setminus S$.
We call an edge in $E(S, V \setminus S)$ a {\em cut edge}.
The {\em size} of a cut $(S, V \setminus S)$ is the sum of the weights of cut edges, i.e., $\sum_{e \in E(S, V \setminus S)} w(e)$.
The Max-Cut problem asks to find a maximum size of a cut, denoted $\mc(G)$, of an input graph $G$.
We assume without loss of generality that $G$ has no degree-one vertices since such a vertex can be trivially accommodated to either side of the bipartition so that its incident edge contributes to the solution.  Therefore, we can first work with removing all degree-one vertices, and after obtaining a solution, we can put them back optimally.

An instance of the {\em constrained Max-Cut} problem consists of an edge weighted graph $G = (V, E, w)$, together with a set $C$ of pairs of vertices of $V$.
A feasible solution, called a {\em constrained cut}, is a cut in which all the pairs in $C$ are separated.
The {\em size} of a constrained cut is defined similarly as that of a cut.
The problem asks to find a maximum size of a constrained cut of $G$, denoted $\cmc(G)$.

Let $G=(V, E, w)$ be an edge weighted graph with $w: E \rightarrow \mathbb{R}$ and let $b : V \rightarrow \mathbb{N}$.
A {\em $b$-factor} of $G$ is a subgraph $H=(V, F)$ such that $F \subseteq E$ and every vertex $v$ has degree exactly $b(v)$ in $H$.
The {\em cost} of a $b$-factor $H=(V, F)$ is the sum of the weights of edges in $F$, i.e., $\sum_{e \in F} w(e)$.
The maximum weighted $b$-factor problem asks to compute a maximum cost of a $b$-factor of $G$, denoted by $\mb(G)$.
This problem can be seen as a generalization of the maximum weight perfect matching problem and is known to be solvable in polynomial time \cite{Gab83}.

\section{NP-Hardness on 1-Planar Graphs}\label{sec:hard}
In this section, we prove Theorem~\ref{thm:hardness}, i.e., NP-hardness of the unweighted Max-Cut problem on 1-planar graphs.
The reduction is performed from the unweighted Max-Cut problem on general graphs.  
Since we use this reduction in the next section for weighted graphs, we exhibit the reduction for the weighted case.  When considering unweighted case, we may simply let $w(e)=1$ for all $e$.

\begin{proof}[of Theorem \ref{thm:hardness}]
Fix an edge weighted graph $G = (V, E, w)$.
For an edge $e=uv$ of $G$, define a path $P_e$ consisting of three edges $uu'$, $u'v'$, and $v'v$, each of weight $w(e)$, where $u'$ and $v'$ are newly introduced vertices (see Fig.~\ref{fig:subdivide}).
\begin{figure}[ht]
  \centering
  \includegraphics[width=0.63\textwidth]{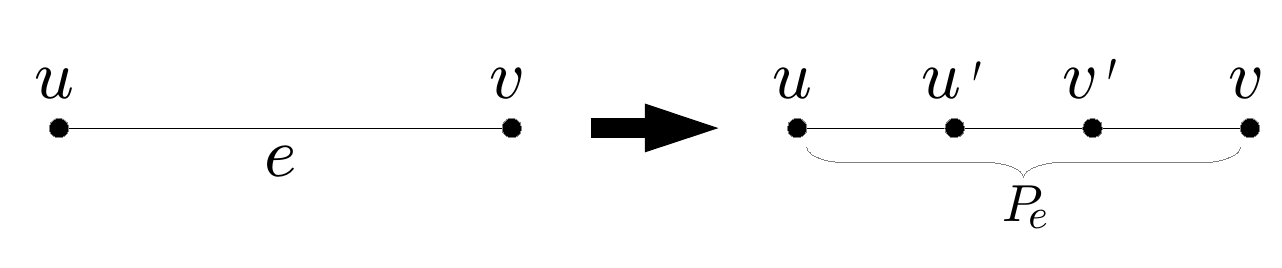}
  \caption{Replacing an edge $e$ with a path $P_e$.}\label{fig:subdivide}
\end{figure}

Let $G'$ be the graph obtained from $G$ by replacing $e$ by $P_e$. 
The following lemma is crucial for our reduction.

\begin{lemma}\label{lem:reduction}
   $\mc(G') = \mc(G) + \max(0, 2w(e))$.
\end{lemma}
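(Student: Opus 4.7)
The plan is to prove both inequalities $\mc(G') \geq \mc(G) + \max(0, 2w(e))$ and $\mc(G') \leq \mc(G) + \max(0, 2w(e))$ via a single case analysis, exploiting the fact that the only edges whose contribution differs between $G$ and $G'$ are $e$ on one side and the three path edges $uu', u'v', v'v$ on the other.

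First I would fix any cut $(S, V(G)\setminus S)$ of $G$ and ask: given this cut, what is the maximum total contribution to the size of a cut of $G'$ that the three edges of $P_e$ can yield, over the four possible placements of $u'$ and $v'$ on either side of the bipartition? I split into two cases. If $u, v$ lie on the same side in $G$, then $e$ contributes $0$ to the cut of $G$, and enumerating the four placements of $(u', v')$ shows $P_e$ contributes either $0$ (when $u', v'$ both join $u, v$), or $2w(e)$ (in each of the remaining three placements, exactly two of the three path edges become cut edges). The best contribution is therefore $\max(0, 2w(e))$. If instead $u, v$ lie on opposite sides, then $e$ contributes $w(e)$ in $G$, and the four placements of $(u', v')$ produce path contributions of $w(e), w(e), 2w(e), 3w(e)$, whose maximum is $w(e) + \max(0, 2w(e))$.

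In both cases, the maximum increment (max contribution in $G'$ minus contribution of $e$ in $G$) is exactly $\max(0, 2w(e))$, and all other edge contributions are identical in the two graphs since the remaining edges of $G$ are also edges of $G'$ and their endpoints lie in $V(G)$. This gives both inequalities simultaneously: starting from a maximum cut of $G$ and extending it optimally by the analysis above yields $\mc(G') \geq \mc(G) + \max(0, 2w(e))$, while restricting a maximum cut of $G'$ to $V(G)$ gives a cut of $G$ whose value, by the case analysis applied in reverse, is at least $\mc(G') - \max(0, 2w(e))$.

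I don't expect any serious obstacle; the only point requiring care is the bookkeeping for possibly negative weights, which is exactly what the $\max(0, 2w(e))$ term is designed to absorb. The proof is essentially a four-by-two table of sub-cases, and the cleanest presentation will probably be to write it as a short chain of equalities/inequalities with the table tucked into the text.
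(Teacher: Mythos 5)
Your argument is correct and follows essentially the same route as the paper's proof: both directions come from extending a maximum cut of $G$ optimally over the placements of $u',v'$ and, conversely, restricting a maximum cut of $G'$ to $V(G)$, with the local analysis of the three-edge path $P_e$ doing all the work (the paper merely organizes the cases by the sign of $w(e)$ and uses exchange arguments instead of your exhaustive four-placement table). One small slip: in the case where $u,v$ lie on opposite sides the four placements yield path contributions $w(e), w(e), 3w(e), w(e)$ rather than $w(e), w(e), 2w(e), 3w(e)$, but the maximum is still $w(e)+\max(0,2w(e))$, so the conclusion is unaffected.
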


\begin{proof}
Suppose first that $w(e) \ge 0$. Consider a maximum size cut $(S, V \setminus S)$ of $G$. If $u$ and $v$ are in the same side of the partition, we extend the cut to obtain a cut of $G'$ by putting $u'$ and $v'$ into the other side. Otherwise, we put $u'$ to $v$'s side and $v'$ to $u$'s side.  In both cases, the cut size increases by exactly $2w(e)$, so $\mc(G') \ge \mc(G) + 2w(e)$.
Conversely, let $(S', V' \setminus S')$ be a maximum size cut of $G'$, where $V' = V \cup \{u', v'\}$.
If $u$ and $v$ are in the same side, at least one of $u'$ and $v'$ must be in the other side, as otherwise, we can increase the size of the cut by moving $u'$ or $v'$ to the other side, contradicting the maximality of $(S', V' \setminus S')$.
This implies that exactly two edges of the path $P_e$ contribute to the cut $(S', V' \setminus S')$. 
Similarly, if $u$ and $v$ are in the different side, we can see that every edge of $P_e$ contribute to $(S', V' \setminus S')$.
Thus, the cut $(S' \setminus \{u', v'\}, (V'\setminus S') \setminus \{u', v'\})$ of $G$ is of size $\mc(G') - 2w(e)$ and hence we have $\mc(G) \ge \mc(G') - 2w(e)$.  From the above two inequalities, we have that $\mc(G') = \mc(G) + 2w(e)$.

Suppose otherwise that $w(e) < 0$. Similarly to the first case, we extend a maximum cut $(S, V \setminus S)$ of $G$ to a cut of $G'$.  
This time, we can do so without changing the cut size, implying that $\mc(G') \ge \mc(G)$: If $u$ and $v$ are in the same side, we put both $u'$ and $v'$ into the same side as $u$ and $v$.
Otherwise, we put $u'$ in $u$'s side and $v'$ in $v$'s side.
For the converse direction, let $(S', V' \setminus S')$ be a maximum size cut of $G'$.
If $u$ and $v$ are in the same (resp. different) side of the cut, the maximality of the cut implies that no (resp. exactly one) edge of $P_e$ contributes to the cut.  Hence the cut $(S' \setminus \{u', v'\}, (V'\setminus S') \setminus \{u', v'\})$ of $G$ has size $\mc(G')$, so that $\mc(G) \ge \mc(G')$.  Therefore $\mc(G') = \mc(G)$, completing the proof.
\qed
\end{proof}

Now, suppose we are given a Max-Cut instance $G$ with its arbitrary drawing.
We will consider here a crossing as a pair of intersecting edges.
We say that two crossings are {\em conflicting} if they share an edge, and the shared edge is called a {\em conflicting edge}.
With this definition, a drawing is 1-planar if and only if it has no conflicting crossings. 

Suppose that the drawing has two conflicting crossings $\{e, e'\}$ and $\{e, e''\}$ with the conflicting edge $e$ (see Fig.~\ref{fig:redraw}).
Replace $e$ by a path $P_e$ defined above and locally redraw the graph as in Fig.~\ref{fig:redraw}.
Then, this conflict is eliminated, and by Lemma~\ref{lem:reduction}, the optimal value increases by exactly $2w(e)$.
Note that this operation also works for eliminating two conflicting crossings caused by the same pair of edges.

\begin{figure}[ht]
  \centering
  \includegraphics[width=0.63\textwidth]{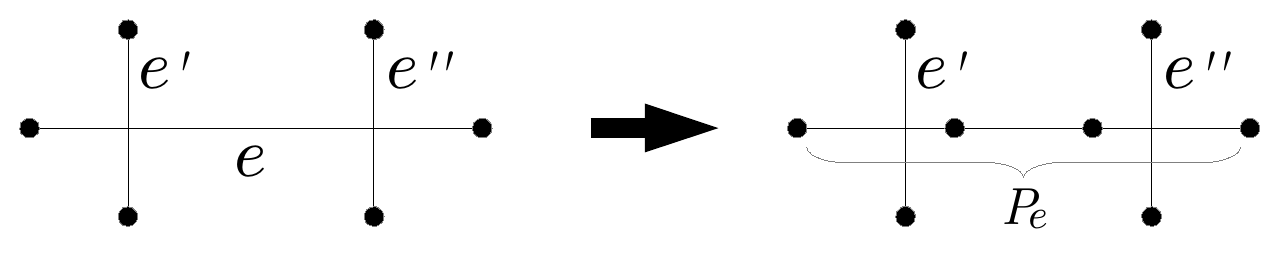}
  \caption{Eliminating a conflicting crossing.}\label{fig:redraw}
\end{figure}

We repeat this elimination process as long as the drawing has conflicting crossings, eventually obtaining a 1-planar graph.
From a maximum cut of the obtained graph, we can obtain a maximum cut of the original graph $G$ by simply replacing each path $P_e$ by the original edge $e$.
The reduction described above is obviously done in polynomial time.
Since the Max-Cut problem on general graphs is NP-hard, this reduction implies Theorem~\ref{thm:hardness} and hence the proof is completed.
\qed
\end{proof}

Note that the above reduction is in fact parameter-preserving in a strict sense, that is, the original drawing has $k$ crossings if and only if the reduced 1-planar drawing has $k$ crossings.



\section{An Improved Algorithm}\label{sec:alg}
In this section, we prove Theorem~\ref{thm:alg}.  As mentioned in Sec.~\ref{sec:intro}, we first reduce a general graph to a 1-planar graph using the polynomial-time reduction given in the previous section.  
Recall that this transformation does not increase the number of crossings.
Hence, to prove Theorem~\ref{thm:alg}, it suffices to provide an $O(2^k(n+k)^{3/2}n)$ time algorithm for 1-planar graphs and its 1-planar drawing with at most $k$ crossings.

\subsection{Algorithm}
Our algorithm consists of the following three phases.

\paragraph{\bf Preprocessing.} 
We first apply the following preprocessing to a given graph $G$: For each crossing $\{ ac, bd\}$, we apply the replacement in Fig.~\ref{fig:subdivide} twice, once to $ac$ and once to $bd$ (and take the cost change in Lemma~\ref{lem:reduction} into account) (see Fig. \ref{fig:preprocess}). 
As a result of this, for each crossing, all the new vertices, $a'$, $b'$, $c'$ and $d'$, concerned with this crossing have degree two and there is no edge among these four vertices.
This preprocessing is needed for the subsequent phases.

\begin{figure}[ht]
  \centering
  \includegraphics[width=0.6\textwidth]{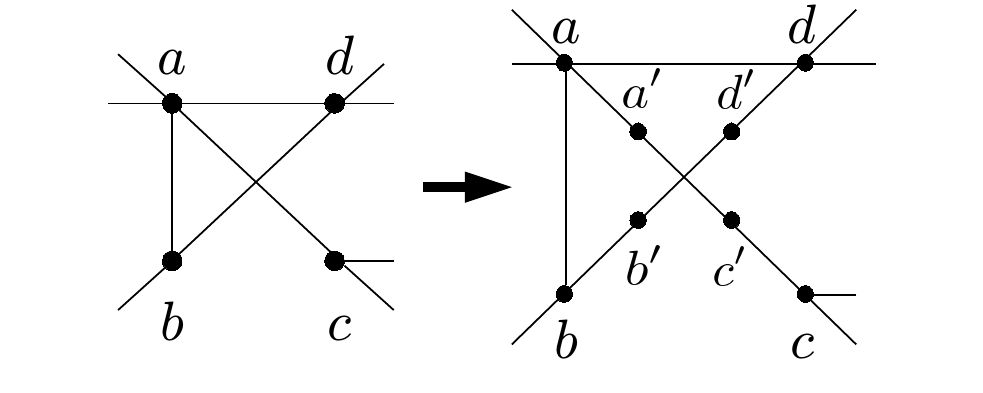}
  \caption{Preprocessing}\label{fig:preprocess}
\end{figure}

\paragraph{\bf Branching.}
As Dahn et al.~\cite{DKM18}'s algorithm, our algorithm branches at each crossing and yields sub-instances. Consider a crossing $\{ ac, bd\}$. 
Obviously, any optimal solution lies in one of the following two cases
(1) $|S \cap \{a, b\}| \neq 1$ and (2) $|S \cap \{a, b\}| = 1$.
To handle case (1), we construct a sub-instance by contracting the pair $\{a, b\}$ into a single vertex. 
For case (2), we add four edges $ab, bc, cd$, and $da$ of weight zero (see Fig. \ref{fig:m1}).  
Thanks to the preprocessing phase, adding these four edges does not create a new crossing.
Note that these edges do not affect the size of any cut.  These edges are necessary only for simplicity of the correctness proof.
We then add the constraint that $a$ and $b$ must be separated.  
Therefore, the subproblem in this branch is the {\em constrained} Max-Cut problem. 
Note also that in this branch, we do not remove the crossing.
We call the inner region surrounded by the cycle $abcd$ a {\em pseudo-face}, and the edge $ab$ (that must be a cut edge) a {\em constrained edge}.
Note that the better of the optimal solutions of the two sub-instances coincides the optimal solution of the original problem.
After $k$ branchings, we obtain $2^k$ constrained Max-Cut instances.

\begin{figure}[ht]
  \centering
  \includegraphics[width=0.5\textwidth]{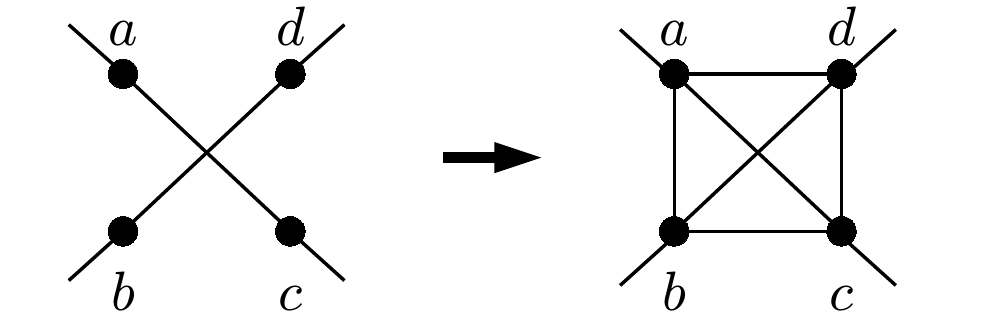}
  \caption{Adding four edges}\label{fig:m1}
\end{figure}

\paragraph{\bf Solving the Constrained Max-Cut Problem.}
In this last phase, we solve $2^k$ constrained Max-Cut instances obtained above, and output the best solution among them.
To solve each problem, we reduce it to the maximum weighted $b$-factor problem (see Sec.~\ref{sec:prel} for the definition), which is shown in Sec.~\ref{ssec:run-time} to be solvable in $O((n + k)^{3/2}\log (n + k))$ time in our case.
Hence the whole running time of our algorithm is $O(2^k(n+k)^{3/2}\log (n + k))$. 

Let $G$ be a graph (with a drawing) obtained by the above branching algorithm.
If there is a face with more than three edges, we triangulated it by adding zero-weight edges without affecting the cut size of a solution.
By doing this repeatedly, we can assume that every face of $G$ (except for pseudo-faces) has exactly three edges. 
Recall that, in the preprocessing phase, we subdivided each crossing edge twice.
Due to this property, no two pseudo-faces share an edge.

Let $G=(V,E,w)$ be an instance of the constrained Max-Cut problem.  We reduce it to an instance $(G^{*}=(V^{*},E^{*},w^{*}), b)$ of the maximum weighted $b$-factor problem.
The reduction is basically constructing a dual graph.  
For each face $f$ of $G$, we associate a vertex $f^{*}$ of $G^{*}$.  Recall that $f$ is surrounded by three edges, say $xy, yz$ and $zx$.  Corresponding to these edges, $f^{*}$ has the three edges $(xy)^{*}, (yz)^{*}$ and $(zx)^{*}$ incident to vertices corresponding to the three neighborhood faces or pseudo-faces (see Fig. \ref{fig:m3}).  The weight of these edges are defined as $w^{*}((xy)^{*})=w(xy)$, $w^{*}((yz)^{*})=w(yz)$, and $w^{*}((zx)^{*})=w(zx)$.  We also add a self-loop $l$ with $w^{*}(l)=0$ to $f^{*}$.  Note that putting this self-loop to a $b$-factor contributes to the degree of $f^{*}$ by 2.  Finally, we set $b(f^{*})=2$.  (In case some edge surrounding $f$ is shared with a pseudo-face, we do some exceptional handling, which will be explained later.)

\begin{figure}[ht]
  \centering
  \includegraphics[width=0.55\textwidth]{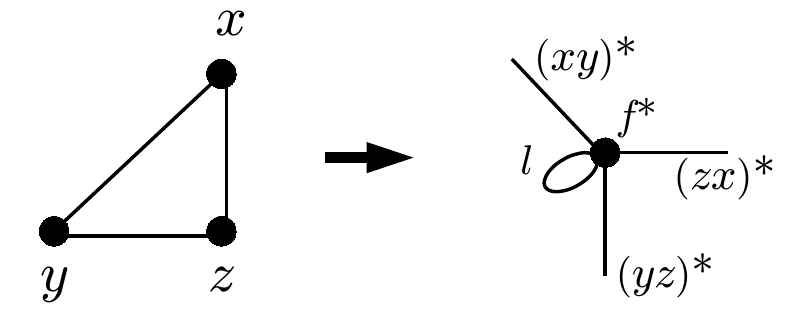}
  \caption{Reduction for a face}\label{fig:m3}
\end{figure}

For each pseudo-face $f$ of $G$, we associate a vertex $f^{*}$.  Corresponding to the three edges $bc$, $cd$, and $da$, we add the edges $(bc)^{*}, (cd)^{*}$ and $(da)^{*}$ to $E^{*}$.  (Note that we do not add an edge corresponding to $ab$.)  We also add a self-loop $l$ to $f^{*}$ (see Fig. \ref{fig:m4}).  The weight of these edges are defined as follows:
\begin{eqnarray*}
    w^*((bc)^*) &=& \frac{\beta - 2\alpha}{3}, \\
    w^*((cd)^*) &=& \frac{\alpha + \beta}{3}, \\
    w^*((da)^*) &=& \frac{\alpha - 2\beta}{3},\\
    w^*(l) &=& \frac{2\alpha + 2\beta}{3},
\end{eqnarray*}
where $\alpha = w(ac)$ and $\beta = w(bd)$.  We set $b(f^{*})=3$.  

\begin{figure}[ht]
  \centering
  \includegraphics[width=0.63\textwidth]{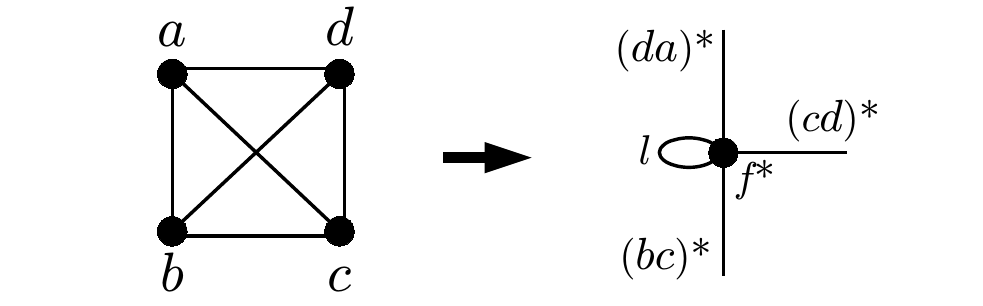}
  \caption{Reduction for a pseudo-face}\label{fig:m4}
\end{figure}

Now we explain an exception mentioned above.  Consider a (normal) face $f$ consisting of three vertices $x$, $y$, and $z$, and let $f^{*}$ be the vertex of $G^{*}$ corresponding to $f$. 
Suppose that some edge, say $xy$, is shared with a pseudo-face $g$, whose corresponding vertex in $G^{*}$ is $g^{*}$.
In this case, the edge $(xy)^{*}$ in $E^{*}$, connecting $f^{*}$ and $g^{*}$, is defined according to the translation rule for the pseudo-face $g$.
Specifically, if the edge $xy$ is identical to the edge $bc$ in Fig.~\ref{fig:m4}, then the weight $w^{*}((xy)^{*})$ is not $w(xy)$ but $\frac{\beta - 2\alpha}{3}$.
When $xy$ is identical to $cd$ or $da$ in Fig.~\ref{fig:m4}, $w^{*}((xy)^{*})$ is defined in the similar manner.

In case $xy$ is identical to the constrained edge $ab$ in Fig.~\ref{fig:m4}, the rule is a bit complicated: First, we do not add an edge $(xy)^{*}$ to $E^{*}$ (which matches the absence of $(ab)^{*}$ in Fig.~\ref{fig:m4}).
Next, we subtract one from $b(f^{*})$; hence in this case, we have $b(f^{*})=1$ instead of the normal case of $b(f^{*})=2$.  
As one can see later, the absence of $(xy)^{*}$ and subtraction of $b(f^{*})$ implicitly mean that $(xy)^{*}$ is already selected as a part of a $b$-factor, which corresponds to the constraint that $a$ and $b$ must be separated in any constrained cut of $G$.
Here we stress that this subtraction is accumulated for boundary edges of $f$.
For example, if all the three edges $xy$, $yz$, and $zx$ are the constrained edges of (different) pseudo-faces, then we subtract three from $b(f^{*})$, which results in $b(f^{*})=-1$ (of course, this condition cannot be satisfied at all and hence the resulting instance has no feasible $b$-factor).

Now we have completed the construction of $(G^{*}=(V^{*},E^{*},w^{*}), b)$.
We then show the correctness of our algorithm in Sec.~\ref{ssec:correct}, and evaluate its running time in Sec.~\ref{ssec:run-time}.

\subsection{Correctness of the Algorithm}\label{ssec:correct}

To show the correctness, it suffices to show that the reduction in the final phase preserves the optimal solutions.
Recall that $\cmc(G)$ is the size of a maximum constrained cut of $G$, and $\mb(G^{*})$ is the cost of the maximum $b$-factor of $G^{*}$.

\begin{lemma}
$\cmc(G)=\mb(G^{*})$.
\end{lemma}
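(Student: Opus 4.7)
The plan is to establish a weight-preserving bijection between constrained cuts of $G$ and feasible $b$-factors of $G^*$, from which $\cmc(G)=\mb(G^*)$ follows immediately by comparing optima in both directions.

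For the forward direction, starting from a constrained cut $(S,V\setminus S)$ I set $(xy)^*\in F$ exactly when $xy\in E(S,V\setminus S)$ for every primal edge $xy$ that actually has a dual in $E^*$, i.e., every edge other than the constrained $ab$'s and the crossing edges $ac,bd$. At each pseudo-face, fix $a\in S,b\notin S$; then the positions of $c,d$ give four cases, labeled (i) $c,d\in S$, (ii) $c\in S,d\notin S$, (iii) $c\notin S,d\in S$, (iv) $c,d\notin S$, and these correspond bijectively to the four legal configurations at $f^*$ achieving $b(f^*)=3$: A $=\{(bc)^*\}\cup\{\text{loop}\}$, D $=\{(bc)^*,(cd)^*,(da)^*\}$, B $=\{(cd)^*\}\cup\{\text{loop}\}$, C $=\{(da)^*\}\cup\{\text{loop}\}$ respectively. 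At each triangle vertex $f^*$ I add the self-loop iff its regular-edge degree falls short of $b(f^*)$. Feasibility of $F$ is then checked face by face: for a triangle face with $k\in\{0,1,2\}$ constrained boundary edges, the planar even-parity principle gives boundary cut count $k$ or $2$, so the regular-edge degree at $f^*$ is $0$ or $2-k$, which combined with the loop matches $b(f^*)=2-k$; for a pseudo-face the constraint that $ab$ is cut plus even parity on the $4$-cycle $abcd$ forces an odd number of cuts among $\{bc,cd,da\}$, giving degree $1$ or $3$ which with the loop hits $b(f^*)=3$.

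For the reverse direction, starting from a $b$-factor $F$ I declare $xy$ a cut edge of $G$ whenever $(xy)^*\in F$, together with every constrained edge $ab$; the cut status of the crossing edges $ac,bd$ at each pseudo-face is read off from the configuration at $f^*$ via the dictionary A$\leftrightarrow$(i), D$\leftrightarrow$(ii), B$\leftrightarrow$(iii), C$\leftrightarrow$(iv), which one checks uniquely determines a bipartition of $\{a,b,c,d\}$ compatible with the $4$-cycle cut pattern. To promote this edge set $T\subseteq E$ to a bipartition of $V$, I pass to the planarization $\tilde G$ (virtual vertex $p$ at each crossing) and check that $T$ extends to a cut $\tilde T$ of $\tilde G$ by inspecting every face of $\tilde G$: the normal triangle faces have even $\tilde T$-parity by the $b(f^*)=2-k$ analysis above, and each of the four triangular sub-faces $apb,\,bpc,\,cpd,\,dpa$ inside a pseudo-face has even parity under either choice of $p$'s side, since in every configuration the restriction of $T$ to the pseudo-face is a bona-fide cut of the $K_4$ on $\{a,b,c,d\}$. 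Hence $\tilde T$ is a cut of $\tilde G$, and restricting to $V$ gives the constrained cut of $G$ that we want.

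The main obstacle is the weight identity at pseudo-faces, since the dual weights $\tfrac{\beta-2\alpha}{3},\tfrac{\alpha+\beta}{3},\tfrac{\alpha-2\beta}{3},\tfrac{2\alpha+2\beta}{3}$ have no obvious combinatorial meaning and the crossing edges $ac,bd$, which carry the only nonzero primal weights $\alpha,\beta$ inside the pseudo-face, are not represented in $E^*$ at all. For this I verify by direct calculation the four identities
\[
\tfrac{\beta-2\alpha}{3}+\tfrac{2\alpha+2\beta}{3}=\beta,\quad
\tfrac{\alpha+\beta}{3}+\tfrac{2\alpha+2\beta}{3}=\alpha+\beta,\quad
\tfrac{\alpha-2\beta}{3}+\tfrac{2\alpha+2\beta}{3}=\alpha,\quad
\tfrac{\beta-2\alpha}{3}+\tfrac{\alpha+\beta}{3}+\tfrac{\alpha-2\beta}{3}=0,
\]
which assert that in configurations A, B, C, D the total $w^*$-weight of the $F$-edges incident to $f^*$ equals $\beta,\alpha+\beta,\alpha,0$ respectively, i.e., the cut-size contribution of the pseudo-face's primal edges in cases (i), (iii), (iv), (ii) (the zero primal weights of $ab,bc,cd,da$ are absorbed exactly by the offsets, leaving the contribution of $ac,bd$). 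Summing these per-pseudo-face identities with the non-pseudo-face contribution $\sum\{w(xy):(xy)^*\in F,\;xy\text{ not incident to a pseudo-face}\}$, where $w^*=w$, gives $w^*(F)=w(E(S,V\setminus S))$, and the two inequalities combine to $\cmc(G)=\mb(G^*)$.
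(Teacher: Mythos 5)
Your proposal is correct and follows essentially the same route as the paper: dual edges of cut edges plus self-loops as fillers, the same four-case analysis at pseudo-faces (your (i)--(iv)/A--D correspondence matches the paper's Cases (3),(4),(2),(1)), and the identical weight identities. The only difference is that you flesh out the reverse direction with an explicit parity argument on the planarization, which the paper dismisses in one line as ``the construction is reversible.''
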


\begin{proof}
We first show $\cmc(G) \leq \mb(G^{*})$.  Let $S$ be a maximum constrained cut of $G$ with cut size $\cmc(G)$.  We construct a $b$-factor $H$ of $G^{*}$ with cost $\cmc(G)$.  Informally speaking, this construction is performed basically by choosing dual edges of cut edges.  

Formally, consider a face $f$ surrounded by edges $xy, yz$ and $zx$. 
First suppose that none of these edges are constrained edges.
Then by construction of $G^{*}$, the degree of $f^{*}$ is 5 (including the effect of the self-loop) and $b(f^{*})=2$.
It is easy to see that zero or two edges among $xy, yz$ and $zx$ are cut edges of $S$.
In the former case, we add only the self-loop $l$ to $H$.
In the latter case, if the two cut edges are $e$ and $e'$, then we add corresponding two edges $(e)^{*}$ and $(e')^{*}$ to $H$.
Note that in either case, the constraint $b(f^{*})=2$ is satisfied.

Next, suppose that one edge, say $xy$, is a constrained edge.  In this case, the degree of $f^{*}$ is 4 and $b(f^{*})=1$.
Since $xy$ is a constrained edge, we know that $xy$ is a cut edge of $S$.
Hence exactly one of $yz$ and $zx$ is a cut edge.
If $yz$ is a cut edge, we add $(yz)^{*}$ to $H$; otherwise, we add $(zx)^{*}$ to $H$.

If two edges, say $xy$ and $yz$, are constrained edges, we have that $b(f^{*})=0$.  In this case, we do not select edges incident to $f^{*}$.

Finally, suppose that all the three edges are constrained edges.  Clearly it is impossible to make all of them cut edges.  Thus $G$ admits no constrained cut, which contradicts the assumption that $S$ is a constrained cut.

Next, we move to pseudo-faces.  For each pseudo-face $f$ with a cycle $abcd$ where $ab$ is a constrained edge, we know that $a$ and $b$ are separated in $S$.  There are four possible cases, depicted in Fig.~\ref{fig:cuts}, where vertices in the same side are labeled with the same color, and bold edges represent cut edges.  Corresponding to each case of Fig.~\ref{fig:cuts}, we select edges in $G^{*}$ as shown in Fig.~\ref{fig:cf} and add them to $H$.  Note that in all four cases, the constraint $b(f^{*})=3$ is satisfied.

\begin{figure}[th]
  \centering
  \includegraphics[width=0.85\textwidth]{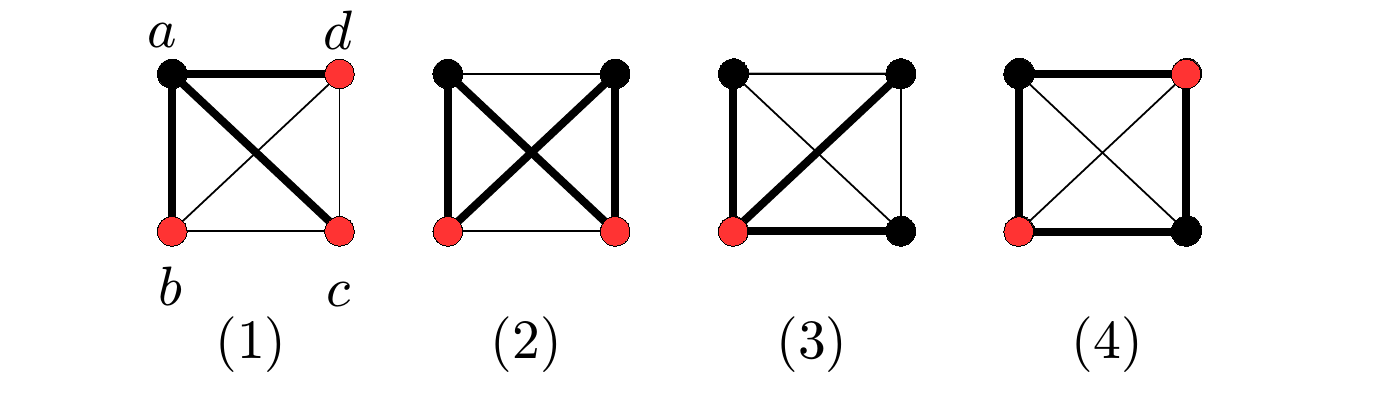}
  \caption{Feasible cuts of $G$ in which $a$ and $b$ are separated}\label{fig:cuts}
\end{figure}

\begin{figure}[th]
  \centering
  \includegraphics[width=0.85\textwidth]{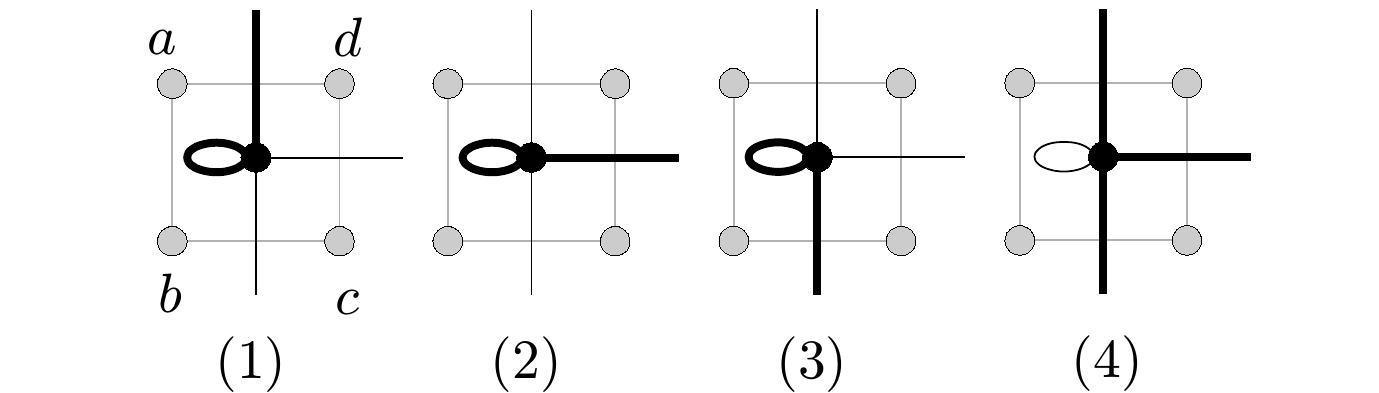}
  \caption{Solution for $G^{*}$ constructed from each cut of Fig.~\ref{fig:cuts}}\label{fig:cf}
\end{figure}

We have constructed a subgraph $H$ of $G^{*}$ and shown that for any vertex of $G^{*}$, the degree constraint $b$ is satisfied in $H$.  Hence $H$ is in fact a $b$-factor.

It remains to show that the cost of $H$ is $\cmc(G)$.  
The edges of $G$ are classified into the following two types; (1) edges on the boundary of two (normal) faces and (2) edges constituting pseudo-faces (i.e., those corresponding to one of six edges of $K_{4}$ in Fig.~\ref{fig:m4} left).
For a type (1) edge $e$, $e$ is a cut edge if and only if its dual $(e)^{*}$ is selected in $H$, and $w(e)=w^{*}((e^{*}))$.
For type (2) edges, we consider six edges corresponding to one pseudo-face simultaneously.
Note that there are four feasible cut patterns given in Fig.~\ref{fig:cuts}, and we determine edges of $H$ according to Fig.~\ref{fig:cf}.
We examine that the total weight of cut edges and that of selected edges coincide in every case:

\smallskip
\noindent
{\bf Case (1):} The weight of cut edges is $w(da)+w(ac)+w(ab)=w(ac)$.  The weight of selected edges is $w^{*}((da)^{*})+w^{*}(l) = \frac{\alpha - 2\beta}{3}+\frac{2\alpha + 2\beta}{3}=\alpha=w(ac)$.  

\smallskip
\noindent
{\bf Case (2):} The weight of cut edges is $w(ab)+w(ac)+w(bd)+w(cd)=w(ac)+w(bd)$.  The weight of selected edges is $w^{*}((cd)^{*})+w^{*}(l) = \frac{\alpha + \beta}{3}+\frac{2\alpha + 2\beta}{3}=\alpha+\beta=w(ac)+w(bd)$.  

\smallskip
\noindent
{\bf Case (3):} The weight of cut edges is $w(ab)+w(bd)+w(bc)=w(bd)$.  The weight of selected edges is $w^{*}((bc)^{*})+w^{*}(l) = \frac{\beta - 2\alpha}{3}+\frac{2\alpha + 2\beta}{3}=\beta=w(bd)$.

\smallskip
\noindent
{\bf Case (4):} The weight of cut edges is $w(ab)+w(bc)+w(cd)+w(da)=0$.  The weight of selected edges is $w^{*}((bc)^{*})+w^{*}((cd)^{*})+w^{*}((da)^{*})=\frac{\beta - 2\alpha}{3} + \frac{\alpha + \beta}{3} + \frac{\alpha - 2\beta}{3}=0$.

\smallskip
\noindent
Summing the above equalities over the whole graphs $G$ and $G^{*}$, we can conclude that the constructed $b$-factor $H$ has cost $\cmc(G)$.

To show the other direction $\cmc(G) \geq \mb(G^{*})$, we must show that, from an optimal solution $H$ of $G^{*}$, we can construct a cut $S$ of $G$ with size $\mb(G^{*})$.  Since the construction in the former case is reversible, we can do the opposite argument to prove this direction; hence we will omit it here. \qed
\end{proof}

\subsection{Time Complexity of the Algorithm}\label{ssec:run-time}

As mentioned previously, to achieve the claimed running time, it suffices to show that a maximum weight $b$-factor of $G^*$ can be computed in $O((n+k)^{3/2}\log (n + k))$ time.
The polynomial time algorithm presented by Gabow \cite{Gab83} first reduces the maximum weighted $b$-factor problem to the maximum weight perfect matching problem, and then solves the latter problem using a polynomial time algorithm.  We follow this line but make a careful analysis to show the above mentioned running time.

For any vertex $v$ of $G^*$, we can assume without loss of generality that $b(v) \le d(v)$, where $d(v)$ is the degree of $v$, as otherwise $G^*$ obviously does not have a $b$-factor.
Gabow's reduction \cite{Gab83} replaces each vertex $v$ with a complete bipartite graph $K_{d(v), d(v) - b(v)}$ as shown in Fig.~\ref{fig:b-factor}, where newly introduced edges have weight zero.
Since the maximum degree of $G^*$ is at most five, each vertex is replaced by a constant sized gadget.
Also, since $G^{*}$ is a planar graph with $O(n+k)$ vertices, the created graph has $O(n+k)$ vertices and admits a balanced separator of size $O(\sqrt{n+k})$.  
It is easy to see that this reduction can be done in $O(n+k)$ time, and $G^*$ has a $b$-factor if and only if the created graph has a perfect matching of the same weight.

\begin{figure}[th]
  \centering
    \includegraphics[width=0.48\textwidth]{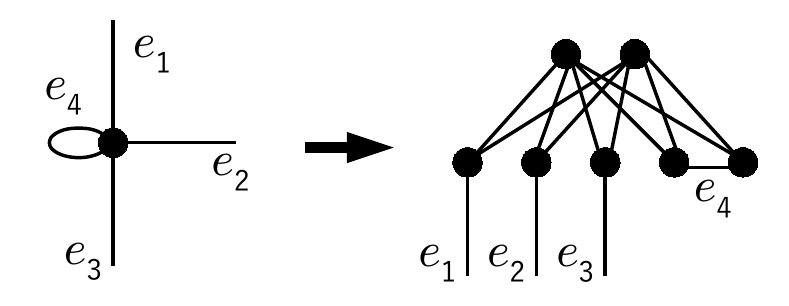}
  \caption{Gabow's reduction}\label{fig:b-factor}
\end{figure}

Lipton and Tarjan \cite{LT80} present an algorithm for the maximum weight perfect matching problem that runs in time $O(n^{3/2} \log n)$ for an $n$-vertex graph having a balanced separator of size $O(\sqrt{n})$.
Thus, by using it, we obtain the claimed running time of $O((n+k)^{3/2}\log (n + k))$.

\section{Concluding Remarks}\label{sec:conc}

In this paper, we have proposed an algorithm for the Max-Cut problem on ``almost'' planar graphs. Our algorithm runs in $O(2^k(n+k)^{3/2}\log (n + k))$ time, where $n$ is the number of vertices of an input graph and $k$ is the number of crossings of a given drawing, which improves the previous result of Dahn et al. \cite{DKM18}.

We remark that our algorithm can be applied to the NAE-SAT problem.  In this problem, given a CNF formula, a clause is satisfied if and only if it has at least one true literal and at least one false literal. The NAE-SAT problem asks to decide if the CNF formula has a satisfying assignment.
This problem is known to be NP-complete in general~\cite{Sch78}, but Moret \cite{Mor88} showed that the Planar NAE-SAT problem, the restriction of inputs to formulas whose incidence graphs are planar, is solvable in polynomial time.
To solve it, he demonstrated a polynomial time reduction from the Planar NAE-SAT problem to the Max-Cut problem on planar graphs. 
This reduction actually preserves the number of crossings, that is, a CNF formula with a drawing of its incidence graph having $k$ crossings is reduced to a Max-Cut instance with $k$ crossings.
Therefore, using our algorithm, the NAE-SAT problem can be solved in $O(2^k(n+k)^{3/2}\log (n + k))$ time, where $n$ is the number of variables in the formula and $k$ is the number of crossings in the (given) incidence graph.


\section*{Acknowledgements}
 The authors deeply thank anonymous referees for giving us valuable comments. In particular, one of the referees pointed out a flaw in an early version of Lemma~\ref{lem:reduction}, which has been fixed in the current paper.

%
%
%
%

\end{document}